\theoremstyle{plain}
\newtheorem{theorem}{Theorem}
\newtheorem{lemma}[theorem]{Lemma}
\newtheorem{corollary}[theorem]{Corollary}
\theoremstyle{definition}
\newtheorem{definition}[theorem]{Definition}
\newtheorem{example}[theorem]{Example}
\newcommand{\hilbertrule}[2]{\displaystyle{\frac{#1}{#2}}}
\newcommand{\justifies}{{\mathbin{:}}}
\newcommand{\bang}{\mathop{!}}
\newcommand{\eFml}{\mathsf{eFml}}
\newcommand{\Fml}{\mathsf{Fml}}
\newcommand{\Tm}{\mathsf{Tm}}
\newcommand{\Prop}{\mathsf{Prop}}
\newcommand{\Comp}{\mathsf{Comp}}
\newcommand{\I}{\mathsf{I}}
\newcommand{\pow}{\mathcal{P}}
\newcommand{\QU}{\mathbb{Q}\cap [0,1]}
\newcommand{\IPJ}{\mathsf{IPJ}_{\I}}
\begin{document}
\title{A logic of interactive proofs}
\author{David Lehnherr \and Zoran Ognjanovi\'c\thanks{Supported by the Science  Fund of the Republic of Serbia project AI4TrustBC.} \and Thomas Studer\thanks{Supported by the Swiss National Science Foundation grant 200020\_184625.}}

\date{}
\maketitle              

\begin{abstract}
We introduce the probabilistic two-agent justification logic $\mathsf{IPJ}$,  a logic in which we can reason about agents that perform interactive proofs. 
In order to study the growth rate of the probabilities in $\mathsf{IPJ}$,  we present a new method of parametrizing $\mathsf{IPJ}$ over certain negligible functions.
Further, our approach leads to a new notion of zero-knowledge proofs.
\par\medskip
\noindent
{\bf Keywords:} interactive proof system,  zero-knowledge proof, epistemic logic,  justification logic,   probabilistic logic
\end{abstract}

\section{Introduction}
An interactive proof system~\cite{Babai85,Goldwasser85} is a protocol between two agents, the prover and the verifier. The aim of the protocol is that the prover can  prove  its  knowledge of  a secret  to the verifier.  
To achieve this, the prover must answer  a challenge provided by the verifier. Usually, the protocols are such that the verifier only knows with high probability that the prover knows the secret, that is the probability is a negligible function in the length of the challenge. 

Several formalizations of the notion \emph{proof of knowledge} are compared and analyzed in~\cite{10.1007/3-540-48071-4_28}.  The aim of the present paper is to provide an epistemic logic model for interactive proofs of knowledge.

Our logic of interactive proofs and justifications $\IPJ$ will be a combination of modal logic, justification logic, and probabilistic logic. The logic includes two agents, $P$ (the prover) and $V$ (the verifier).
The modal part of $\IPJ$ consists of two  S4 modalities $\Box_P$ and $\Box_V$. As usual, $\Box_a$ means \emph{agent $a$ knows that}.
Justification logic adds explicit reasons for the agents' knowledge~\cite{artemovFittingBook,jlbook}. We have formulas of the form $t \justifies_a \alpha$, which stand for \emph{agent $a$ knows $\alpha$ for reason $t$}. The reason represented by the term $t$, can be a formal proof as in the first justification logic, the Logic of Proofs~\cite{Art01BSL,weakArithm}, the execution of an interactive proof protocol, the result of an agent's reasoning, or any other justification of knowledge like, e.g., direct observation. For $\IPJ$, we will use a two-agent version of the logic of proofs together with the justification yields belief principle $t \justifies_a \alpha \to \Box_a \alpha$.
The third ingredient of $\IPJ$ are probability operators of the form $\mathcal{P}_{\geq r}$ and  $\mathcal{P}_{\approx r}$ meaning \emph{with probability  greater than or equal to $r$} and \emph{with probability  approximately~$r$}, respectively. 
For the probabilistic part, we use the approach of~\cite{ogra00,ognjanovicRM16}, which has been adapted to justification logic in~\cite{komaogst,koogstJournal}. In order to deal with approximate probabilities, we need probability measures that can take non-standard values. Logics of this kind have been investigated in~\cite{OSS17,approxCondDefault}.

Goldwasser et al.~\cite{Goldwasser85} introduced interactive proof systems as follows.
Let $\mathcal{L}$ be a language and
$P$ and $V$ a pair of interacting (probabilistic) Turing machines, where $P$ has unrestricted computational  power and $V$ is polynomial time.
$\langle P,V\rangle$ is an interactive proof system for $\mathcal{L}$ if the following conditions hold:
\begin{enumerate}
\item \textsf{Completeness: }  For all $k \in \mathbb{N}$, there exists an $m \in \mathbb{N}$ such that for all inputs $x \in \mathcal{L}$ with $|x|>m$, the probability of $\langle P,V\rangle$ accepting $x$ is at least $1-|x|^{-k}$. 
\item \textsf{Soundness:} For all $k \in \mathbb{N}$, there exists an $m \in \mathbb{N}$ such that for all inputs $x  \not\in \mathcal{L}$ with $|x|>m$ and any interactive Turing machine $P'$, the probability of $\langle P',V\rangle$ accepting $x$ is at most $|x|^{-k}$.  
\end{enumerate}

Less formally,  the agent $P$ tries to prove its knowledge about a proposition~$\alpha$ to the agent $V$. They may do that by following a challenge-response scheme. That is, $V$ sends a challenge to $P$ who then tries to answer it using his knowledge about $\alpha$. 
On success, $V$'s confidence in $P$ knowing $\alpha$ is increased. Moreover, the harder the challenge, the stronger is $V$'s belief. However, $P$ may be dishonest and hence $V$ may be convinced (with a low probability) that a wrong statement is true.

In order to model this in $\IPJ$, we introduce terms of the form $f^n_t$ that represents $V$'s view of the run of the protocol where $P$ has evidence $t$ and $n$ is a measure for the complexity of the run (this may refer to the complexity of the challenge in a challenge response scheme). The outcome of a run will be formalized as $\mathcal{P}_{\geq r}(f^n_t\justifies_V\Box_P\alpha)$ meaning that with probability greater than or equal to~$r$, the run of the protocol with complexity $n$ provides a justification for $V$ that~$P$ knows $\alpha$.
Note that we are abstracting away the concrete protocol.  Moreover, the subscript $t$ in $f^n_t$ does not imply that $V$ has access to $t$; it only states that $P$'s role in the protocol  depends on $t$.
We say that a formula $\alpha$ is interactively provable if the following two conditions hold:
\begin{enumerate}
\item \textsf{Completeness: } Assume $t\justifies_P \alpha$. For all $k \in \mathbb{N}$, there exists a degree of complexity $m \in \mathbb{N}$ such that, for $n>m$ the probability of $f^n_t$ justifying $\Box_P\alpha$ from $V$'s view is at least $1-n^{-k}$. 
\item \textsf{Soundness:} Assume $\neg t\justifies_P \alpha$. For all $k \in \mathbb{N}$, there exists a degree of complexity $m \in \mathbb{N}$ such that, for $n>m$ the probability of $f^n_t$ justifying $\Box_P\alpha$ from $V$'s view is at most $n^{-k}$. 
\end{enumerate}
Since $\IPJ$ is a propositional logic, we need a way to express the soundness and completeness condition without quantifiers.
For integers $m,k$, we start with sets of formulas $\I_{m,k}$ and define  the set of interactively provable formulas 
\[
\I := \bigcap_{k}\bigcup_{m}\I_{m,k}.
\] 
If a formula $\alpha$ belongs to $\I_{m,k} $, then the following two conditions must hold for $n>m$:
\begin{enumerate}
\item $ t\justifies_P\alpha \rightarrow \mathcal{P}_{\geq 1-\frac{1}{n^k}}(f^n_t\justifies_V \Box_P \alpha)$
\item $\neg(t\justifies_P\alpha) \rightarrow \mathcal{P}_{\leq \frac{1}{n^k}}(f^n_t\justifies_V \Box_P \alpha)$
\end{enumerate}
Therefore, if $\alpha \in \I$ and $t\justifies_P\alpha$ then, for every $k$, there exists an $m$ such that $\alpha\in \textsf{I}_{m,k}$ and thus $\mathcal{P}_{\geq 1-\frac{1}{n^k}}(f^n_t\justifies_V \Box_P \alpha)$.
Observe that this closely resembles the previously stated completeness property of interactive proof systems. The soundness property is obtained analogously.

Furthermore, we allow the probability operators to take non-standard values and consider protocols with transfinite complexity~$\omega$ to capture the notion of a limit. Hence we can express statements of the form
\[
\text{if $t\justifies_P\alpha$, then the probability of $f^\omega_t\justifies_V\Box_P\alpha$ is almost~$1$.}
\]
Using the operator  $\mathcal{P}_{\approx r}$, we add two more conditions for interactively provable formulas:
\begin{enumerate}[resume]
\item $t\justifies_P\alpha \rightarrow \mathcal{P}_{\approx 1}(f^\omega_t\justifies_V \Box_P \alpha) \text{ if } \alpha \in \textsf{I}$;
\item $\neg(t\justifies_P\alpha) \rightarrow \mathcal{P}_{\approx 0}(f^\omega_t\justifies_V \Box_P \alpha) \text{ if } \alpha \in \textsf{I}$.
\end{enumerate}

We also include a principle saying that the justifications $f^n_t$ are monotone in the complexity $n$:
\begin{enumerate}[resume]
\item $ f^m_t\justifies_a \alpha  \rightarrow f^n_t\justifies_a \alpha$ if $m<n$.
\end{enumerate}

Justification logics with interacting agents are not new. 
Yavorskaya~\cite{TYav08TOCSnonote} introduced the evidence verification operator $\bang^V_P$ that can be used by $V$ to verify $P$'s evidence, i.e.~her system includes the axiom 
$
t\justifies_P \alpha \to \bang^V_P t\justifies_V t\justifies_P \alpha
$.
This resembles the definition of the complexity class NP as interactive proof system, see, e.g.,~\cite{CompComp}. There, the verifier is a deterministic Turing machine. The prover generates a proof certificate $t$ for $\alpha$ (where the complexity of $t$ is polynomial in~$\alpha$), i.e.~we have $t\justifies_P \alpha$. 
Now $P$ sends this certificate $t$ to $V$ and $V$ checks it (which can be done in polynomial time). A successful check results in $\bang^V_P t$ being a justification for $V$ that $P$ knows the proof certificate~$t$ for $\alpha$, i.e.~$\bang^V_P t\justifies_V t\justifies_P \alpha$.

\section{Syntax}

Let $\mathbb{N}$ be the  set of natural numbers and  $\mathbb{N}^+\colonequals\mathbb{N} \setminus \{0\}$. We define
\[
\Comp \colonequals \mathbb{N} \cup \{\omega\}
\]
where $\omega > n$ for each $n \in \mathbb{N}$.

We start with a countable set of justification variables and justification constants. Further we have a symbol $f^n$ for each $n \in \Comp$.
The set of \emph{terms}~$\Tm$ is given by the following grammar
\begin{equation*}
t \coloncolonequals c \mid x \mid t \cdot t\mid t+t \mid \bang t \mid f^n t
\end{equation*}
where $c$ is a justification constant and $x$ is a justification variable. In the following,  we usually write $f^n_t$ for $f^n t$.

Our language is based on two agents, the prover~$P$ and the verifier~$V$. We write $a$ for an arbitrary agent, i.e.~either $P$ or $V$.  Further,  we use a countable set of atomic propositions~$\Prop$.
The set of \emph{epistemic formulas} $\eFml$ is given by the following grammar:
\begin{equation*}
\alpha \coloncolonequals p \mid \lnot\alpha \mid \alpha \land \alpha \mid \Box_a \alpha \mid t \justifies_a \alpha
\end{equation*}
where $p$ is an atomic proposition, $t$ is a term and $a$ is an agent.

For our  formal approach, we  consider probabilities that range over the unit interval of a non-archimedean recursive field that contains all rational numbers. We proceed as in~\cite{approxCondDefault} by choosing the unit interval of the Hardy field $\mathbb{Q}[\epsilon]$. The set $\mathbb{Q}[\epsilon]$ consists of all rational functions of a fixed non-zero infinitesimal $\epsilon \in \mathbb{R}^*$, where $\mathbb{R}^*$ is a non-standard extension of $\mathbb{R}$ (see~\cite{Robinson96}) for further details). Its positive elements have the form:
\begin{equation*}
\epsilon^k\frac{\sum^n_{i=0} a_i\epsilon^i}{\sum^m_{i=0}b_i\epsilon^i},
\end{equation*}
where $a_i,b_i \in \mathbb{Q}$ for all $i\geq0$ and $a_0\cdot b_0 \neq 0$.
We use $S$ to denote the unit interval of $\mathbb{Q}[\epsilon]$.

The set of \emph{formulas} $\Fml$ is given by the following grammar:
\begin{equation*}
A  \coloncolonequals \alpha \mid \mathcal{P}_{\geq s}\alpha \mid \mathcal{P}_{\approx r}\alpha \mid \lnot A \mid A \land A
\end{equation*}
where $\alpha$ is an epistemic formula,  $s \in S$, and $r \in \QU$. 

Since any epistemic formula  is a formula, we sometimes use latin letters to denote epistemic formulas, e.g.~in $t \justifies A \to  \mathcal{P}_{\approx 1}B$, the letters $A$  and $B$ stand  for epistemic formulas.  

The remaining propositional connectives are defined as usual.
Further we use the following syntactical abbreviations:
\begin{align*}
& \mathcal{P}_{<s}\alpha  \text{ denotes } \neg \mathcal{P}_{\geq s}\alpha \quad \quad 
\mathcal{P}_{\leq s}\alpha  \text{ denotes }  \mathcal{P}_{\geq 1- s}\neg \alpha\\
& \mathcal{P}_{>s}\alpha  \text{ denotes } \neg \mathcal{P}_{\leq s}\alpha \quad \quad
 \mathcal{P}_{=s}\alpha  \text{ denotes } \mathcal{P}_{\leq s}\alpha \land \mathcal{P}_{\geq s}\alpha
\end{align*}

Our Logic of Interactive Proofs $\IPJ$ depends on a parameter $\I$. We  will introduce that parameter later when it will be relevant. We start with presenting the axioms of  $\IPJ$, which are divided into three groups: epistemic axioms, probabilistic axioms, interaction axioms.

\subsection*{Epistemic axioms}
For both modal operators $\Box_P$ and $\Box_V$ we have the axioms for the modal logic S4.
\begin{fleqn}
\begin{equation*}
\begin{array}{p{3em}l}
\textsf{(p)} & \text{all propositional tautologies}\\
\textsf{(k)} &  \Box_a (A \to B) \to (\Box_a A \to \Box_a B)\\
\textsf{(t)} &  \Box_a A \to A\\
\textsf{(4)} &  \Box_a A \to \Box_a\Box_a A\\
\end{array}
\end{equation*}
\end{fleqn}
For both agents, we have the axioms for the Logic of Proofs~\cite{Art01BSL} and  the connection axiom~$\textsf{(jyb)}$.
This yields the system $\mathsf{S4LP}$ from~\cite{ArtNog05JLC}.
\begin{fleqn}
\begin{equation*}
\begin{array}{p{3em}l}
\textsf{(j)} &  s \justifies_a ( A \rightarrow B )\rightarrow (t\justifies_a A \rightarrow_a s\cdot t \justifies_a B) \\
\textsf{(j+)} & (s \justifies_a A \lor t \justifies_a A ) \rightarrow (s+t) \justifies_a A\\
\textsf{(jt)} &  t \justifies_a A \rightarrow A \\
\textsf{(j4)} &  t \justifies_a A \rightarrow \bang t \justifies_a t \justifies_a A\\
\textsf{(jyb)} &  t \justifies_a A \rightarrow \Box_a A
\end{array}
\end{equation*}
\end{fleqn}

\subsection*{Probabilistic axioms}
The probabilistic axioms correspond  to the axiomatization of approximate conditional probabilities used in~\cite{OSS17,approxCondDefault} adapted to the unconditional case.
\begin{fleqn}
\begin{equation*}
\begin{array}{p{3em}l}
\textsf{(p1)} & P_{\geq 0} A\\
\textsf{(p2)} & P_{\leq s} A \to P_{< t} A,  \text{ where $s < t$}  \\
\textsf{(p3)} & P_{< s} A \to P_{\leq s} A\\
\textsf{(p4)} & P_{\geq 1} (A \leftrightarrow B) \to  (P_{= s} A \rightarrow P_{= s} B)\\
\textsf{(p5)} & P_{\leq s} A \leftrightarrow P_{\geq 1-s}\neg A \\
\textsf{(p6)} & (P_{=s} A \wedge P_{=t} B \wedge P_{\geq 1}\neg(A \wedge B)) \rightarrow P_{=\min(1,s+t)}(A \vee B)   \\
\textsf{(pa1)} & P_{\approx r} A \to P_{\geq r_1} A , \text{ for every rational $r_1 \in [0,r)$}\\
\textsf{(pa2)} & P_{\approx r} A \to P_{\leq r_1} A ,  \text{ for every rational $r_1 \in (r,1]$} 
\end{array}
\end{equation*}
\end{fleqn}

\subsection*{Interaction axioms}

So far,  we have axioms  for an epistemic justification logic with approximate probabilities. Let us now add axioms for  terms of the form $f^n_t$ that model interactive proof protocols. These axioms depend on the parameter $\I$ in $\IPJ$, which we introduce next.

An \emph{interaction specification} 
 $\I$ is a function $\I:  \mathbb{N} \times \mathbb{N} \to \pow(\eFml)$, i.e.~to each
 $m,k \in \mathbb{N}$ we assign a set of epistemic formulas $\I(m,k)$. 
 In the following,  we  write  $\I_{m,k}$ for $\I(m,k)$. 
 Further, we overload the notation and use $\I$ also to denote the set 
\[
\I \colonequals \bigcap_k\bigcup_m\I_{m,k}.
\]
The interaction axioms are:
\begin{fleqn}
\begin{equation*}
\begin{array}{p{3em}l}
\textsf{(m)} & f^m_t\justifies_a \alpha  \rightarrow f^n_t\justifies_a \alpha  \text{ for all $m,n \in \Comp$ such that $m<n$}\\
\textsf{(c)} & t\justifies_P\alpha  \rightarrow {P}_{\geq 1-\frac{1}{n^k}}(f^n_t\justifies_V \Box_P \alpha) \text{ if $n>m$ and } \alpha \in \I_{m,k}\\
\textsf{(s}) & \neg(t\justifies_P \alpha ) \rightarrow {P}_{\leq \frac{1}{n^k}}(f^n_t\justifies_V \Box_P \alpha) \text{ if $n>m$ and } \alpha \in \I_{m,k}\\
\textsf{(c$\omega$)} & t\justifies_P \alpha  \rightarrow {P}_{\approx 1}(f^\omega_t\justifies_V \Box_P \alpha) \text{ if } \alpha \in \I\\
\textsf{(s$\omega$)} &\neg(t\justifies_P \alpha) \rightarrow {P}_{\approx 0}(f^\omega_t\justifies_V \Box_P \alpha ) \text{ if }  \alpha \in \I
\end{array}
\end{equation*}
\end{fleqn}

\subsection*{Inference rules}

The rules of $\IPJ$ are the following. 
We have modus ponens:
\[
	\hilbertrule{A \quad A\rightarrow B}{B}\
\]
$\IPJ$ also includes the modal necessitation rule as well as the axiom necessitation rule from justification logic:
\[
		\hilbertrule{A}{\Box A} \qquad\qquad
		\hilbertrule{\text{$A$ is an axiom of $\IPJ$}}{c_1 \justifies_{a_1} c_2 \justifies_{a_2} \cdots  c_n \justifies_{a_n} A} 
\]
for arbitrary constants $c_i$ and agents $a_i$.
Of course,  it would be possible to parameterize $\IPJ$ additionally by a constant specification as it is often done in justification logic. 
This would not affect our treatment of interactive proofs.

We have  the following rules for the probabilistic part:
\begin{enumerate}
\item
From $A$ infer  $P_{\geq 1 } A$ 
\item
From $B \to P_{\neq s} A$ for all $s \in S$ infer $B\to \bot$
\item
From $B \to P_{\geq r-\frac{1}{n}} A$ and $B \to P_{\leq r+\frac{1}{n}} A$ for all integers $n$,  infer 
\[
B \to P_{\approx r} A
\]
\end{enumerate}
Of course in the last rule, only premises $B \to P_{\geq r-\frac{1}{n}} A$ are considered for which $r-\frac{1}{n}>0$ holds and $B \to P_{\leq r+\frac{1}{n}} A$ is only considered if $r+\frac{1}{n}<1$.

\section{Semantics}

For this section, we assume that we are given an arbitrary interaction specification $\I$. 
Many notions in this chapter will depend on that parameter.
For any set $X$ we use $\pow(X)$ to denote the power set of~$X$.
We will use a Fitting-style semantics~\cite{Fit05APAL} for justification logic, but modular models~\cite{Art12SLnonote,KuzStu12AiML} would work as well.

\begin{definition}[Evidence relation]\label{def:evrel:1} 
An \emph{evidence relation} is a mapping 
\[
\mathcal{E}: \Tm \to \pow(\eFml)
\]
from terms to sets of epistemic formulas such that for all $s,t\in \textsf{Tm}$, $\alpha \in \eFml$, constants $c_i$, and agents $a_i$:
\begin{multicols}{2}
\begin{enumerate}
\item $\mathcal{E} (s) \cup \mathcal{E}(t) \subseteq \mathcal{E}(s+t)$;
\item $\mathcal{E} (s) \cdot \mathcal{E}(t) \subseteq \mathcal{E}(s \cdot t)$;
\item $t\justifies \mathcal{E}(t) \subseteq \mathcal{E}(!t)$;
\item $c_2 \justifies_{a_2} \cdots  c_n \justifies_{a_n} A \in \mathcal{E}(c_1)$ if $\alpha$ is an axiom;
\item $\alpha \in \mathcal{E}(f^n_t)$, if $\alpha \in \mathcal{E}(f^m_t)$ for $n>m$.
\end{enumerate}
\end{multicols}
\end{definition}

\begin{definition}[Epistemic model]
An  \emph{epistemic model} for\/ $\IPJ$ is a tuple 
$
M = \langle W,R,\mathcal{E},V\rangle
$
where:
\begin{enumerate}
\item $W$ is a non-empty set of objects called worlds.
\item $R$ maps each agent $a$ to a reflexive and transitive accessibility relation $R_a$ on $W$.
\item $\mathcal{E}$ maps each world $w$ and each agent $a$ to an evidence relation~$\mathcal{E}^a_w$.
\item $V$ is a valuation mapping each world to a set of atomic propositions.
\end{enumerate}
\end{definition}

\begin{definition}[Truth within a world] Let $M=  \langle W,R,\mathcal{E},V\rangle$ be an epistemic model for $\IPJ$ and let $w$ be a world in $W$. For an epistemic formula $\alpha \in \eFml,$ we define $M,w \Vdash \alpha$ inductively by:
\begin{enumerate}
\item
$M,w \Vdash \beta \text{ if{f} } \beta \in V(w)$ for $\beta \in \Prop$
\item 
$M,w \Vdash \neg \beta \text{ if{f} } M,w \not \Vdash \beta$
\item 
$M,w \Vdash \beta \land \gamma \text{ if{f} } M,w \Vdash \beta \text{ and } M,w\Vdash \gamma$
\item 
$M,w \Vdash \Box_a \beta \text{ if{f} }  M,u \Vdash \beta \text{ for all }u\in W \text{ with } R_awu$
\item 
$M,w \Vdash t\justifies _a \beta \text{ if{f} }   \beta \in\mathcal{E}^a_w(t) \text{ and } M,u \Vdash \beta \text{ for all }u \in W \text{with } R_awu$
\end{enumerate}
\end{definition}

\begin{definition}[Algebra] Let $U$ be a  non-empty set  and let $H$ be a non-empty subset of $\mathcal{P}(U)$. $H$ will be called an algebra over $U$ if the following hold:
\begin{itemize}
\item $U \in H$
\item $X,Y \in H \rightarrow X \cup Y \in H$ 
\item $X \in H \rightarrow U \setminus X \in H$
\end{itemize}
\end{definition}

\begin{definition}[Finitely additive measure]Let $H$ be an algebra over $U$ and $\mu: H \to S$, where $S$ is the unit interval of the hardy field $\mathbb{Q}[\epsilon]$. We call $\mu$ a \textit{finitely additive measure}  if the following hold:
\begin{enumerate}
\item $\mu(U)=1$
\item 
$
X \cap Y = \emptyset \implies \mu(X \cup Y) = \mu(X)+\mu(Y)
$ for all $X,Y \in H$.
\end{enumerate}
\end{definition}

\begin{definition}[Probability space] A \emph{probability space} is a triple 
$ \langle U, H, \mu \rangle$
where:
\begin{enumerate}
\item $U$  is  a non-empty set
\item $H$ is an algebra over $U$
\item $\mu: H \to S$ is a finitely additive measure 
\end{enumerate}
\end{definition}

\begin{definition}[Quasimodel]
A quasimodel for $\IPJ$ is a tuple 
\begin{equation*}
M = \langle W,R,\mathcal{E},V, U, H, \mu, w_0 \rangle
\end{equation*}
such that
\begin{enumerate}
\item $\langle W,R,\mathcal{E},V\rangle$ is an epistemic model for $\IPJ$
\item $U \subseteq W$
\item $ \langle U, H, \mu \rangle$ is a  probability  space
\item $w_0 \in  U$
\end{enumerate}
\end{definition}

Let  $M = \langle W,R,\mathcal{E},V, U, H, \mu, w_0\rangle$ be a quasimodel, $w\in W$, and $\alpha \in \eFml$.
Since $M$ contains  an epistemic  model, we write $M,w \Vdash \alpha$ for 
\[ 
\langle W,R,\mathcal{E},V\rangle,w  \Vdash \alpha.
\]

\begin{definition}[Events]
Let $M =  \langle W,R,\mathcal{E},V, U, H, \mu, w_0 \rangle$ be a quasimodel.
For an epistemic formula $\alpha \in \eFml$, we define the event that $\alpha$ occurs as
\begin{equation*}
[\alpha]_{M}  := \{u \in U \mid M,u \Vdash \alpha\} 
\end{equation*}
We use $[\alpha]_{M}^C$ for the complement event $U \setminus [\alpha]_{M}$.
\end{definition}
When the quasimodel $M$ is clear from the context, 
we often drop  the subscript~$M$ in $[\alpha]_{M}$.

\begin{definition}[Independent events] Let $M$ be a quasimodel. We say that two events $S,T \in H$ are independent in $M$  if 
\begin{equation*}
\mu(S \cap T) = \mu(S) \cdot \mu(T).
\end{equation*}
\end{definition}

\begin{definition}[Probability almost $r$] 
Let $\langle U,H,\mu \rangle$ be a probability space. For $r\in \mathbb{Q}\cap[0,1]$, we say that  $X \in H$ has probability almost $r$ ($\mu(X)\approx r$) if for all $n \in \mathbb{N}^+$
$
\mu(X) \in \left[ r-\frac{1}{n},r+\frac{1}{n}\right].
$
\end{definition}

\begin{definition}[Truth  in a quasimodel] Let 
\[
M =  \langle W,R,\mathcal{E},V, U, H, \mu, w_0 \rangle
\]
be quasimodel for $\IPJ$. We define $M \models A$ inductively by:
\begin{enumerate}
\item
$M \models A \text{ if{f} } M,w_0 \Vdash A$ for $A \in \eFml$;  otherwise
\item
$M \models \neg B \text{ if{f} } M \not \models B$
\item 
$M \models B \land C \text{ if{f} } M \models B \text{ and } M\models C$
\item
$M \models  \mathcal{P}_{\geq s} \alpha \text{ if{f} } \mu([\alpha]) \geq s$
\item
$M \models \mathcal{P}_{\approx r}\alpha \text{ if{f} } \mu([\alpha]) \approx r $
\end{enumerate}
\end{definition}

\begin{definition}[Measurable model] A quasimodel 
\[
M =  \langle W,R,\mathcal{E},V, U, H, \mu, w_0 \rangle
\] 
is called \emph{measurable} if 
$
[\alpha] \in H  \text{ for all } \alpha \in \eFml
$.
\end{definition}


\begin{definition}[Model]\label{def:model} 
A \emph{model} for $\IPJ$ is a measurable quasimodel $M$ for $\IPJ$ that  satisfies:
\begin{enumerate}
\item $M \models t:_P \alpha \rightarrow \mathcal{P}_{\geq1-\frac{1}{n^k}}(f^n_t:_V\square^P\alpha) \text{ if } n>m \text{ and }\alpha \in \textsf{I}_{m,k};
$
\item $M \models \neg (t:_P \alpha) \rightarrow \mathcal{P}_{\leq\frac{1}{n^k}}(f^n_t:_V\square^P\alpha) \text{ if } n>m \text{ and } \alpha\in \textsf{I}_{m,k}.
$
\end{enumerate}
We say that a formula $A$ is $\IPJ$-valid if $M \models A$  for all models $M$ for $\IPJ$.
\end{definition}

\section{Properties and Results}

We start with two auxiliary lemmas.
\begin{lemma}
Let $\beta,\gamma$ be epistemic formulas.   $\IPJ$ proves 
\begin{enumerate}
\item $\mathcal{P}_{=s} \gamma \to \mathcal{P}_{\leq s}(\gamma \land \beta)$.
\item  $\mathcal{P}_{\leq s}\gamma \land \mathcal{P}_{< r}\beta  \to \mathcal{P}_{< r+s}(\gamma \lor \beta )$ where $r+s\leq 1$.
\end{enumerate}
\end{lemma}
\begin{proof}
For the first claim, suppose  $\mathcal{P}_{=s} \gamma$. Thus we get  $\mathcal{P}_{=1-s} \lnot \gamma$.
Further let~$t$ be such that $\mathcal{P}_{=t} (\lnot \beta \land \gamma)$.
Using axiom~\textsf{(p6)} we infer
\[
\mathcal{P}_{=(1-s)+t}(\lnot \gamma \lor (\lnot \beta \land \gamma)).
\]
Since $(1-s)+t = 1-(s-t)$,  this is equivalent to
\[
\mathcal{P}_{=s-t}(\gamma \land \lnot (\lnot \beta \land \gamma)).
\]
By axiom~\textsf{(p4)} we find 
\[
\mathcal{P}_{=s-t}(\gamma \land \beta).
\]
We conclude $\mathcal{P}_{\leq s}(\gamma \land \beta)$.

To show the second claim, suppose $\mathcal{P}_{\leq s}\gamma$.  By the first claim we get   
\[
\mathcal{P}_{\leq s}(\gamma \land \lnot\beta).
\]
From $\mathcal{P}_{< r}\beta$ 
we obtain using axiom~\textsf{(p6)} that
$\mathcal{P}_{<r+s}((\gamma \land \lnot\beta)\lor \beta)$.  Using axiom~\textsf{(p4)} we conclude  $\mathcal{P}_{<r+s}(\gamma \lor \beta)$.
\end{proof}

We can read the  operator $\mathcal{P}_{\approx 1}$  as \emph{it is almost certain that}.
This operator provably behaves like a normal modality.
\begin{lemma}
Let $\alpha, \beta$ be epistemic formulas.
\begin{enumerate}
\item $\IPJ$ proves $\mathcal{P}_{\approx 1}(\alpha \to \beta) \to ( \mathcal{P}_{\approx 1}\alpha \to  \mathcal{P}_{\approx 1} \beta)$.
\item The rule $\hilbertrule{\alpha}{\mathcal{P}_{\approx 1} \alpha}$ is derivable in  $\IPJ$.
\end{enumerate}
\end{lemma}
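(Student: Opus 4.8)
The plan is to reduce both parts to the $r=1$ instance of probabilistic rule~3. Observe that for $r=1$ the side condition $n\ge\frac{1}{1-r}$ degenerates: the upper bounds $\mathcal{P}_{\le 1+\frac1n}\gamma$ carry no information (a probability is always $\le 1<1+\frac1n$), so to conclude $\mathcal{P}_{\approx 1}\gamma$ from a hypothesis $B$ it suffices to produce, for every $n\in\mathbb{N}^+$, a derivation of $B\to\mathcal{P}_{\ge 1-\frac1n}\gamma$. Thus in both parts the real work is to establish these lower bounds uniformly in $n$.

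For part~2, assume $\vdash\alpha$. Probabilistic rule~1 gives $\vdash\mathcal{P}_{\ge 1}\alpha$. I first record the schema $\vdash\mathcal{P}_{\ge 1}A\to\mathcal{P}_{\ge s}A$ for every $s\in S$ with $s<1$: by \textsf{(p3)} and \textsf{(p2)} we have $\mathcal{P}_{<s}A\to\mathcal{P}_{\le s}A\to\mathcal{P}_{<1}A$, and contraposing this chain (using \textsf{(p)} and modus ponens together with the definitions of $\mathcal{P}_{<s}$ and $\mathcal{P}_{<1}$) yields the schema. Instantiating at $s=1-\frac1n$ and combining with $\mathcal{P}_{\ge 1}\alpha$ gives $\vdash\mathcal{P}_{\ge 1-\frac1n}\alpha$ for every $n$. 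Applying rule~3 with $r=1$ (taking $B$ to be a propositional tautology, so that $B\to\mathcal{P}_{\ge 1-\frac1n}\alpha$ is interderivable with $\mathcal{P}_{\ge 1-\frac1n}\alpha$) then yields $\vdash\mathcal{P}_{\approx 1}\alpha$, which is exactly the derivability of the rule.

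For part~1, set $B:=\mathcal{P}_{\approx 1}(\alpha\to\beta)\wedge\mathcal{P}_{\approx 1}\alpha$; by propositional deduction it suffices to prove $\vdash B\to\mathcal{P}_{\approx 1}\beta$. The key ingredient is a probabilistic modus ponens at the level of lower bounds, namely the schema
\[
\vdash \big(\mathcal{P}_{\ge s}(\alpha\to\beta)\wedge\mathcal{P}_{\ge t}\alpha\big)\to\mathcal{P}_{\ge s+t-1}\beta
\qquad\text{whenever } s+t-1\ge 0 .
\]
This follows from three facts: the propositional tautology $\big(\alpha\wedge(\alpha\to\beta)\big)\to\beta$; monotonicity of $\mathcal{P}_{\ge u}$ along provable implications; and the conjunction lower bound $\vdash\big(\mathcal{P}_{\ge s}C\wedge\mathcal{P}_{\ge t}D\big)\to\mathcal{P}_{\ge s+t-1}(C\wedge D)$, the syntactic counterpart of $\mu(X\cap Y)\ge\mu(X)+\mu(Y)-1$, obtained from the additivity axiom \textsf{(p6)} via \textsf{(p5)} and \textsf{(p4)} in the standard way (cf.~\cite{OSS17,approxCondDefault}). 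Now fix $n\in\mathbb{N}^+$ and take $s=t=1-\frac{1}{2n}$, so $s+t-1=1-\frac1n$. From $B$, axiom \textsf{(pa1)} yields $\mathcal{P}_{\ge 1-\frac{1}{2n}}(\alpha\to\beta)$ and $\mathcal{P}_{\ge 1-\frac{1}{2n}}\alpha$; feeding these into the schema above gives $\vdash B\to\mathcal{P}_{\ge 1-\frac1n}\beta$. Since $n$ was arbitrary, rule~3 with $r=1$ delivers $\vdash B\to\mathcal{P}_{\approx 1}\beta$, completing part~1.

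I expect the main obstacle to lie entirely in the probabilistic bookkeeping: deriving the conjunction lower bound (equivalently, subadditivity of $\mathcal{P}$ on unions) from the equality-form additivity axiom \textsf{(p6)}, which requires passing between $\mathcal{P}_{\ge}$ and $\mathcal{P}_{\le}$ through \textsf{(p5)}, using \textsf{(p4)} to replace events by provable equivalents, and splitting $C\vee D$ into disjoint pieces. A secondary point of care is the degenerate $r=1$ instance of rule~3 flagged above: one must check that the absent upper-bound premises are genuinely vacuous and that the lower bounds alone justify the conclusion $\mathcal{P}_{\approx 1}$.
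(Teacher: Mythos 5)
The paper states this lemma without proof, so there is no official argument to compare against; judged on its own merits, your proof is correct, and its skeleton is essentially forced: apart from the interaction axioms $\mathsf{(c\omega)}$/$\mathsf{(s\omega)}$, which only concern formulas of the form $f^\omega_t\justifies_V\Box_P\alpha$ with $\alpha\in\I$, probabilistic rule~3 is the \emph{only} way to introduce $\mathcal{P}_{\approx r}$, so both parts must reduce to its $r=1$ instance as you do. Your reading of the degenerate side condition is also the right one: as printed, the constraint $n\geq\frac{1}{1-r}$ is meaningless at $r=1$, but in the source axiomatization \cite{approxCondDefault} the two premise families carry separate constraints ($n\geq 1/r$ for the lower bounds, $n\geq 1/(1-r)$ for the upper bounds), which at $r=1$ yields exactly what you use: lower-bound premises $B\to\mathcal{P}_{\geq 1-\frac{1}{n}}A$ for every $n\in\mathbb{N}^+$ and no upper-bound premises at all. (A quibble: for $r=1$ the formulas $\mathcal{P}_{\leq 1+\frac{1}{n}}\gamma$ are not merely uninformative but ill-formed, since $\mathcal{P}_{\leq s}$ abbreviates $\mathcal{P}_{\geq 1-s}\neg$ and subscripts must lie in $S$.)

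The one substantive understatement is in part~1: the monotonicity schema and the bound $(\mathcal{P}_{\geq s}C\wedge\mathcal{P}_{\geq t}D)\to\mathcal{P}_{\geq s+t-1}(C\wedge D)$ are indeed derivable, but \emph{not} from \textsf{(p4)}--\textsf{(p6)} and propositional reasoning alone, as your phrasing suggests. Axiom \textsf{(p6)} manipulates only exact probabilities $\mathcal{P}_{=s}$, and from hypotheses of the form $\mathcal{P}_{\geq s}$ there is no way to reach exact values except through the infinitary rule~2: one adjoins $\mathcal{P}_{=u}(\cdot)$ for each Boolean combination of $C$ and $D$ (including $\neg C\wedge\neg D$, which is needed to force the total mass to be at most $1$), derives a numerical contradiction using \textsf{(p6)}, \textsf{(p4)}, \textsf{(p2)}, \textsf{(p3)} and the \textsf{(p5)}-style unfolding of $\mathcal{P}_{\leq}$, and then discharges these assumptions by nested applications of rule~2. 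This is the standard technique of \cite{approxCondDefault,OSS17} and it does go through, so this is a matter of completing the bookkeeping you already flagged as the main obstacle, not a flaw in the strategy.
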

\begin{proof}
We first establish that $\IPJ$ proves
\begin{equation}\label{eq:2nd:1}
\mathcal{P}_{\approx 1}(\gamma \lor \beta) \land \mathcal{P}_{\approx 0}\gamma \to \mathcal{P}_{\approx 1}\beta.
\end{equation}
From $\mathcal{P}_{\approx 1}(\gamma \lor \beta)$ we get 
\begin{equation}\label{eq:2nd:2}
\forall r <1 \text{ we have } \mathcal{P}_{\geq r}(\gamma \lor \beta). 
\end{equation}
From $ \mathcal{P}_{\approx 0}\gamma$
we get
\begin{equation}\label{eq:2nd:3}
\forall s > 0 \text{ we have } \mathcal{P}_{\leq s}\gamma.
\end{equation}
From \eqref{eq:2nd:2} and  \eqref{eq:2nd:3} we obtain $ \mathcal{P}_{\approx 1}\beta$.
Suppose towards a contradiction that there exists $r<1$ with $\lnot \mathcal{P}_{\geq r}\beta$.
By the definition of $\mathcal{P}_{<r}$ this is $\mathcal{P}_{< r}\beta$.
Together with \eqref{eq:2nd:3} this yields by the second claim of the previous lemma  that
\[
\mathcal{P}_{< r+s}(\gamma \lor \beta) \quad\forall s > 0 \text{ with $r+s<1$.}
\]
For $s' = \frac{1-r}{2}$ we have $r+s' = \frac{1+r}{2}<1$. Thus there exists $q<1$ with $\mathcal{P}_{< q}(\gamma \lor \beta)$, which contradicts \eqref{eq:2nd:2}.  Hence \eqref{eq:2nd:1} is established.  Let $\gamma$ be $\lnot \alpha$ and observe that 
$\mathcal{P}_{\approx 1}\alpha \to \mathcal{P}_{\approx 0} \lnot \alpha$ is provable in  $\IPJ$. Now the first claim of this lemma immediately follows from  \eqref{eq:2nd:1}.

It remains to show that the rule of $\mathcal{P}_{\approx 1}$ necessitation is derivable. Suppose that~$\alpha$ is derivable.
Thus $\mathcal{P}_{\geq 1} \alpha$ is derivable. Using axioms~\textsf{(p2)} and~\textsf{(p3)}  we obtain $\mathcal{P}_{\geq 1-\frac{1}{n}} \alpha$ for all integers $n$. Thus we infer $\mathcal{P}_{\approx 1}\alpha$.
\end{proof}

An immediate consequence of these lemmas is the following.
If $t$ justifies the prover's knowledge of $\alpha$, then, with almost certainty, the interactive proof protocol based on $t$ will be successful in providing the verifier with a justification for $\alpha$.

\begin{corollary} For $\alpha \in \I$,
$\IPJ$ proves $t\justifies_P \alpha  \rightarrow {P}_{\approx 1}(c\cdot f^\omega_t\justifies_V \alpha)$ for a arbitrary constant~$c$.
\end{corollary}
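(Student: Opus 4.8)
The plan is to obtain the corollary from axiom \textsf{(c$\omega$)} in three moves: first internalize the modal truth principle as a $V$-justification, then use it together with the application axiom \textsf{(j)} to replace $\Box_P\alpha$ by $\alpha$ inside the justification term, and finally transport the resulting implication through the almost-certainty operator $\mathcal{P}_{\approx 1}$ by invoking the preceding Lemma.

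Concretely, I would first note that $\Box_P\alpha \to \alpha$ is an instance of axiom \textsf{(t)}, hence an axiom of $\IPJ$. Applying the axiom necessitation rule with a single constant and the agent $V$ yields $\vdash c\justifies_V(\Box_P\alpha \to \alpha)$; it is essential (and legitimate) here that the agent carrying the justification, $V$, need not match the agent $P$ of the eliminated modality, since axiom necessitation attaches an arbitrary constant of an arbitrary agent to any axiom. Next I instantiate axiom \textsf{(j)} for agent $V$, taking the outer justification to be $c$, the inner term to be $f^\omega_t$, and the formulas to be $A := \Box_P\alpha$ and $B := \alpha$. This gives
\[
c\justifies_V(\Box_P\alpha \to \alpha) \to \bigl(f^\omega_t\justifies_V \Box_P\alpha \to c\cdot f^\omega_t\justifies_V \alpha\bigr).
\]
Discharging the justified premise by modus ponens produces the propositional implication $\vdash f^\omega_t\justifies_V \Box_P\alpha \to c\cdot f^\omega_t\justifies_V \alpha$.

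It remains to lift this implication under $\mathcal{P}_{\approx 1}$ and prepend the hypothesis. By the necessitation rule of the Lemma, the displayed implication yields $\vdash \mathcal{P}_{\approx 1}(f^\omega_t\justifies_V \Box_P\alpha \to c\cdot f^\omega_t\justifies_V \alpha)$, and the normal-modality axiom of the Lemma then delivers $\vdash \mathcal{P}_{\approx 1}(f^\omega_t\justifies_V \Box_P\alpha) \to \mathcal{P}_{\approx 1}(c\cdot f^\omega_t\justifies_V \alpha)$. Since $\alpha \in \I$, axiom \textsf{(c$\omega$)} supplies $t\justifies_P\alpha \to \mathcal{P}_{\approx 1}(f^\omega_t\justifies_V \Box_P\alpha)$, and chaining the two implications by propositional reasoning gives the desired conclusion $t\justifies_P\alpha \to \mathcal{P}_{\approx 1}(c\cdot f^\omega_t\justifies_V \alpha)$.

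The argument is almost entirely mechanical, and I do not expect a serious obstacle; the only point requiring attention is the first step, namely recognizing that the truth axiom for $\Box_P$ can be witnessed by a $V$-justification even though the modality belongs to the other agent. Once $c\justifies_V(\Box_P\alpha\to\alpha)$ is available, every subsequent step is forced by \textsf{(j)}, axiom \textsf{(c$\omega$)}, and the Lemma.
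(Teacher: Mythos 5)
Your proof is correct and is exactly the intended derivation: the paper states this as an immediate corollary of the preceding Lemma and gives no separate proof, but the route via axiom necessitation applied to the instance $\Box_P\alpha\to\alpha$ of \textsf{(t)} (with the constant justified for agent $V$), axiom \textsf{(j)} to form $c\cdot f^\omega_t\justifies_V\alpha$, the two parts of the Lemma to lift the implication under $\mathcal{P}_{\approx 1}$, and axiom \textsf{(c$\omega$)} to supply the antecedent is precisely what the placement of the corollary after the Lemma presupposes. Your observation that the constant may be attached for agent $V$ even though the internalized axiom concerns $\Box_P$ is indeed the one point worth flagging, and it is legitimate since the axiom necessitation rule allows arbitrary constants and agents.
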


The deductive system $\IPJ$ is sound with respect to $\IPJ$-models.

\begin{theorem}[Soundness]
Let $\I$ be an arbitrary interaction specification.
For any formula~$F$ we have that
\[
\vdash F \quad\text{implies}\quad \text{$F$ is $\IPJ$-valid.}
\]
\end{theorem}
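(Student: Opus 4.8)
The plan is to prove soundness by the usual induction on derivations of $F$ — more precisely, on the well-founded derivation tree, since two of the probabilistic rules have infinitely many premises. Fix an arbitrary model $M = \langle W,R,\mathcal{E},V,U,H,\mu,w_0\rangle$ for $\IPJ$. It then suffices to show (i) that every axiom of $\IPJ$ is true in $M$ and (ii) that every inference rule preserves truth in $M$; as $M$ is arbitrary, $\IPJ$-validity of $F$ follows.

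For the axioms I would proceed group by group. The tautologies $\textsf{(p)}$ and the Boolean clauses are immediate from the truth definition. The S4 axioms $\textsf{(k)}$, $\textsf{(t)}$, $\textsf{(4)}$ are the standard consequences of $R_a$ being reflexive and transitive, read off the clause for $\Box_a$. The Logic of Proofs axioms are validated by the closure conditions of Definition~\ref{def:evrel:1}: $\textsf{(j)}$ from $\mathcal{E}(s)\cdot\mathcal{E}(t)\subseteq\mathcal{E}(s\cdot t)$, $\textsf{(j+)}$ from $\mathcal{E}(s)\cup\mathcal{E}(t)\subseteq\mathcal{E}(s+t)$, $\textsf{(j4)}$ from $t\justifies\mathcal{E}(t)\subseteq\mathcal{E}(\bang t)$, each combined with the accessible-world component of the clause for $t\justifies_a$; $\textsf{(jt)}$ follows from reflexivity of $R_a$, and $\textsf{(jyb)}$ from the fact that the clause for $t\justifies_a\alpha$ refines that for $\Box_a\alpha$. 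The probabilistic axioms $\textsf{(p1)}$--$\textsf{(p6)}$ come from $\mu$ being a finitely additive measure into $S$ with $\mu(U)=1$ — e.g.\ $\textsf{(p6)}$ via finite additivity applied to $[A]$ and $[B]\setminus[A]$ once $\mu([A]\cap[B])=0$ — and $\textsf{(pa1)}$, $\textsf{(pa2)}$ are immediate from the definition of $\mu(X)\approx r$. Among the interaction axioms, $\textsf{(m)}$ is exactly condition~5 of Definition~\ref{def:evrel:1}, while $\textsf{(c)}$ and $\textsf{(s)}$ are verbatim the two clauses defining a model in Definition~\ref{def:model}, so these cost nothing.

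The genuinely new cases are the limit axioms $\textsf{(c}\omega\textsf{)}$, $\textsf{(s}\omega\textsf{)}$ and the two infinitary rules. For $\textsf{(c}\omega\textsf{)}$, suppose $M,w_0\Vdash t\justifies_P\alpha$ with $\alpha\in\I$. Axiom $\textsf{(m)}$ gives $[f^n_t\justifies_V\Box_P\alpha]\subseteq[f^\omega_t\justifies_V\Box_P\alpha]$ for every finite $n$, so $\mu$ of the $\omega$-event dominates $\mu$ of each finite stage. Since $\alpha\in\I=\bigcap_k\bigcup_m\I_{m,k}$, taking $k=1$ yields some $m$ with $\alpha\in\I_{m,1}$, and $\textsf{(c)}$ then gives $\mu([f^n_t\justifies_V\Box_P\alpha])\geq 1-\tfrac{1}{n}$ for all $n>m$. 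Letting $n$ exceed $\max(m,j)$ places $\mu([f^\omega_t\justifies_V\Box_P\alpha])$ in every standard neighbourhood $[1-\tfrac{1}{j},1]$, which is precisely $\mu\approx 1$. Soundness of the rule inferring $B\to\bot$ crucially uses that $\mu([A])\in S$: if $M\models B$ while $M\models\mathcal{P}_{\neq s}A$ for every $s\in S$, then $\mu([A])$ would differ from every element of $S$, a contradiction, so $M\not\models B$. The $\mathcal{P}_{\approx r}$-introduction rule is handled by noting that for the finitely many $n<\tfrac{1}{1-r}$ the interval $[r-\tfrac{1}{n},r+\tfrac{1}{n}]$ already contains $[0,1]$, so the hypotheses for $n\geq\tfrac{1}{1-r}$ are enough to place $\mu([A])$ in every such interval.

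I expect $\textsf{(s}\omega\textsf{)}$ to be the crux. Here one needs an \emph{upper} bound $\mu([f^\omega_t\justifies_V\Box_P\alpha])\approx 0$, but monotonicity $\textsf{(m)}$ bounds the $\omega$-event from \emph{below}, and passing to the complement only reproduces a lower bound, so the clean limit argument used for $\textsf{(c}\omega\textsf{)}$ does not transfer: validating $\textsf{(s}\omega\textsf{)}$ cannot rest on $\textsf{(s)}$ and $\textsf{(m)}$ alone and must instead draw directly on the defining structure a model imposes at the limit stage. Settling exactly how the small-measure behaviour at $f^\omega$ is pinned down is the step requiring the most care; beyond it, the only remaining non-routine bookkeeping is organising the overall induction over the well-founded, infinitely-branching derivation trees produced by the two probabilistic rules.
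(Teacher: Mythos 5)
Your plan coincides with the paper's own proof on every case the paper actually writes out: the paper, too, proceeds by induction on derivations, notes that $\mathsf{(m)}$, $\mathsf{(c)}$ and $\mathsf{(s)}$ are valid immediately by Definition~\ref{def:evrel:1} and Definition~\ref{def:model}, and then treats $\mathsf{(c\omega)}$ by exactly your argument — instantiate $k=1$ to obtain $m$ with $\alpha\in\I_{m,1}$, apply the model condition for $\mathsf{(c)}$ at a finite stage $n''\geq\max(m,j)$, and lift the bound $1-\frac{1}{j}$ to the $\omega$-stage using $\mathsf{(m)}$ together with finite additivity of $\mu$. Your treatment of the routine axioms and of the two infinitary rules is also essentially right (one harmless slip: for $n<\frac{1}{1-r}$ the interval $\left[r-\frac{1}{n},r+\frac{1}{n}\right]$ need not contain $[0,1]$; the required bound for small $n$ follows instead from the nestedness of the intervals and $\mu([A])\leq 1$).

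The one place you stop short, $\mathsf{(s\omega)}$, is precisely the place the paper's proof is silent: it declares $\mathsf{(c\omega)}$ ``the interesting case'' and never returns to $\mathsf{(s\omega)}$. Your diagnosis of why the limit argument does not transfer is correct, and it is more than a difficulty of presentation — it is a genuine gap in the paper, not in your attempt. Definition~\ref{def:model} constrains only the finite stages, and condition~5 of Definition~\ref{def:evrel:1} only forces $\mathcal{E}(f^n_t)\subseteq\mathcal{E}(f^\omega_t)$, so the semantics bounds $\mu([f^\omega_t\justifies_V\Box_P\alpha])$ from below but never from above. Indeed $\mathsf{(s\omega)}$ is falsifiable in a model as defined: take $\I_{m,k}=\{p\}$ for all $m,k$ (so $p\in\I$), a single reflexive world $w_0$ with $p$ true, $U=W=\{w_0\}$, $H=\{\emptyset,U\}$, $\mu$ trivial; give agent $P$ the minimal evidence relation (so $p\notin\mathcal{E}^P_{w_0}(s)$ for every term $s$, hence $M\models\neg(x\justifies_P p)$ and both clauses of Definition~\ref{def:model} hold, the relevant finite-stage events being empty); but seed agent $V$'s evidence with $\Box_P p\in\mathcal{E}^V_{w_0}(f^\omega_x)$ and close up — nothing flows down to finite stages, since the closure conditions feed $f^n_x$ only from $f^m_x$ with $m<n$. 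This is a measurable model satisfying Definition~\ref{def:model}, yet $\mu([f^\omega_x\justifies_V\Box_P p])=1\not\approx 0$. So the theorem as stated fails for $\mathsf{(s\omega)}$; completing the induction requires strengthening Definition~\ref{def:model} with an $\omega$-stage clause mirroring $\mathsf{(s\omega)}$ (the analogous clause for $\mathsf{(c\omega)}$ is redundant, as your argument and the paper's show), and the same repair is needed for the second zero-knowledge axiom, whose soundness the paper likewise asserts to be ``similar to $\mathsf{(c\omega)}$''. In short: where your proof stops is where the paper's proof has a hole that no cleverer argument can close without amending the model definition.
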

\begin{proof}
As usual by induction on the length of the derivation. 
The interesting case is when $F$ is an instance of $\mathsf{(c\omega)}$.
But first note that axioms $\mathsf{(m)}$ and  $\mathsf{(c)}$ are $\IPJ$-valid because of Definition~\ref{def:evrel:1} and Definition~\ref{def:model},  respectively.

Now let  $F$ be an instance of $\mathsf{(c\omega)}$.
Then $F$ is of the form
\[
t\justifies_P \alpha  \rightarrow {P}_{\approx 1}(f^\omega_t\justifies_V \Box_P \alpha)
\]
for some $\alpha \in \textsf{I}$.
Let 
$
M =  \langle W,R,\mathcal{E},V, U, H, \mu, w_0 \rangle
$ 
be an arbitrary model for $\IPJ$ and assume $M \models t\justifies_P \alpha$. We need to show
\begin{equation}\label{eq:sound:1}
\mu([f^\omega_t\justifies_V \Box_P \alpha]) \in \left[ 1- \frac{1}{n},1\right] \quad \text{for all $n \in \mathbb{N}^+$}.
\end{equation}
We fix an arbitrary $n \in \mathbb{N}^+$.
Because of  $\alpha \in \I$, we know that there exists an $m$ such that $\alpha \in \I_{m,1}$.
By soundness of axiom $\mathsf{(c)}$
we find that for each $n'>m$
\begin{equation*}
\mu([f^{n'}_t\justifies_V \Box_P \alpha]) \geq 1- \frac{1}{n'}.
\end{equation*}
Let $n''\in  \mathbb{N}$ be such that $n''>m$ and $n''\geq n$. We find
\begin{equation}\label{eq:sound:2}
\mu([f^{n''}_t\justifies_V \Box_P \alpha]) \geq 1- \frac{1}{n''} \geq 1- \frac{1}{n}.
\end{equation}
By soundness of axiom $\mathsf{(m)}$ we get that for each $w \in W$ 
\[
M,w \Vdash f^{n''}_t\justifies_V \Box_P \alpha \text{ implies } M,w \Vdash f^\omega_t\justifies_V \Box_P \alpha.
\]
Therefore, and by finite additivity of $\mu$, we obtain
\begin{equation}\label{eq:sound:3}
\mu([f^\omega_t\justifies_V \Box_P \alpha]) \geq \mu([f^{n''}_t\justifies_V \Box_P \alpha]).
\end{equation}
Taking \eqref{eq:sound:2} and \eqref{eq:sound:3} together yields \eqref{eq:sound:1}.
\end{proof}

In practice, one often considers  interactive proofs systems that are round-based, see~\cite{CompComp}.  
\begin{definition}[Round-based interactive proof system] An interactive protocol $\langle P,V \rangle$ is called \emph{round-based} if the following two conditions hold:
\begin{enumerate}
\item \textsf{Completeness:} Let $x \in \mathcal{L}$. There exists a polynomial $p(x)$  such that the probability that $\langle P,V \rangle$ halts in an accepting state after $p(x)$ many messages is at least $\frac{2}{3}$.
\item \textsf{Soundness:}  Let $x \notin \mathcal{L}$ and let $p(x)$ be any polynomial.  For any interactive Turing machine $P'$, the probability that $\langle P',V \rangle$ halts in an accepting state after $p(x)$ many messages is at most $\frac{1}{3}$.
\end{enumerate}
\end{definition}
This definition achieves negligible (resp.~overwhelming) probabilities by repeating the protocol several times and deciding based on a majority vote.  Although this definition is simple to model in $\IPJ$,  it is not suitable for a limit analysis because our measure is not $\sigma$-additive. 
Note that to properly formalize $\sigma$-additivity one needs countable conjunctions and disjunctions~\cite{IkodinovicOPR20}, which we do not want to include here.
However,  for finitely many rounds,  we can describe how the probability increases throughout the rounds (given that they are pairwise independent). 

\begin{lemma}\label{ipp}
Let $M$ be an $\IPJ$-model for an arbitrary interaction specification~$\I$.
Consider justification terms $s_1,\ldots,s_n$ and an epistemic formula $\alpha$ such that 
\begin{enumerate}
\item $M \models s_i\justifies_V \alpha$ for each $s_i$;
\item $[s_i\justifies_V \alpha]$ and $[s_j\justifies_V \alpha]$ are independent events for all $i \neq j$.
\end{enumerate}
We find that
$
M \models \bigwedge_{i=1,\ldots,n}\mathcal{P}_{\geq1-r}(s_i\justifies_V \alpha) \rightarrow \mathcal{P}_{\geq1 -r^{n} }\alpha.
$
\end{lemma}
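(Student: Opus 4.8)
The statement is semantic, so the plan is to fix an arbitrary $\IPJ$-model $M = \langle W,R,\mathcal{E},V,U,H,\mu,w_0\rangle$ satisfying hypotheses (1) and (2) and to unfold the truth definition of the conditional. By the propositional clauses for $\models$ it suffices to assume the antecedent, i.e.\ $M \models \mathcal{P}_{\geq 1-r}(s_i\justifies_V\alpha)$, which by clause~4 means $\mu([s_i\justifies_V\alpha]) \geq 1-r$ for each $i$, and to derive the consequent $\mu([\alpha]) \geq 1 - r^n$. Writing $E_i := [s_i\justifies_V\alpha]$, I note first that $E_i \in H$ and $[\alpha] \in H$ because $M$ is measurable, so every set and complement appearing below lies in the algebra $H$ and $\mu$ is defined on it.

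The first key step is factivity. Since $\textsf{(jt)}$ gives $s_i\justifies_V\alpha \to \alpha$ and this axiom is sound, every world forcing $s_i\justifies_V\alpha$ forces $\alpha$; hence $E_i \subseteq [\alpha]$ for each $i$, and therefore
\[
[\alpha]^C \subseteq \bigcap_{i=1}^n E_i^C .
\]
By monotonicity of $\mu$ (itself a consequence of finite additivity together with $\mu\geq 0$) it is then enough to bound $\mu\Bigl(\bigcap_{i=1}^n E_i^C\Bigr)$ by $r^n$ and use $\mu([\alpha]) = 1 - \mu([\alpha]^C)$. Observe that this containment does not use hypothesis~(1): the core inequality seems to need only the probability bounds and the independence assumption, with (1) guaranteeing merely that the scenario is the intended one in which the $s_i$ genuinely justify $\alpha$ at $w_0$.

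The second key step factors the intersection. From $\mu(E_i)\geq 1-r$ we get $\mu(E_i^C) = 1-\mu(E_i)\leq r$, so if the events $E_1^C,\dots,E_n^C$ are \emph{mutually} independent then
\[
\mu\Bigl(\bigcap_{i=1}^n E_i^C\Bigr) = \prod_{i=1}^n \mu(E_i^C) \leq r^n ,
\]
giving $\mu([\alpha]) \geq 1 - r^n$. This is exactly where the main obstacle sits. Hypothesis~(2) supplies only \emph{pairwise} independence, and for two events pairwise independence does transfer to the complements, so the case $n\le 2$ is unproblematic; but for $n\ge 3$ pairwise independence does not imply mutual independence (two fair bits together with their XOR are pairwise independent, yet their triple intersection collapses to measure $0$ rather than $\tfrac18$), and then the product formula above can fail. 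Concretely, with $r=\tfrac12$ such a triple can be arranged with $[\alpha]=E_1\cup E_2\cup E_3$ of measure $\tfrac34 < \tfrac78 = 1-r^3$. To run the induction on $n$ — at stage $k$ factoring $\bigcap_{i<k}E_i^C$ against $E_k^C$ — one genuinely needs each $E_k^C$ independent of the accumulated intersection, i.e.\ the family $\{E_i\}$ mutually independent. I would therefore read hypothesis~(2) as mutual independence (or strengthen it accordingly), since the literally stated pairwise version is not sufficient for the claimed bound once $n\ge 3$.
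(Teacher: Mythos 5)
Your argument is exactly the paper's own: the paper too derives $\bigcup_{i=1}^n[s_i\justifies_V\alpha]\subseteq[\alpha]$ from soundness of \textsf{(jt)}, passes to complements via $\mu(\bigcup_i[s_i\justifies_V\alpha]) = 1-\mu(\bigcap_i[s_i\justifies_V\alpha]^C)$, and then concludes
\[
\mu\Bigl(\bigcap_{i=1}^n[s_i\justifies_V\alpha]^C\Bigr)\leq \prod_{i=1}^n r = r^n
\]
in a single step annotated ``indep.''. So on the first two steps you and the paper agree completely, and your diagnosis of the third step is correct and applies verbatim to the paper's proof: the factorization needs \emph{mutual} independence of the family, whereas hypothesis~(2) grants only pairwise independence, which for $n\geq 3$ is strictly weaker. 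The paper silently uses the stronger property.

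One refinement to your counterexample, though. Hypothesis~(1) is not quite inert: $M\models s_i\justifies_V\alpha$ for all $i$ forces $w_0\in\bigcap_i[s_i\justifies_V\alpha]$, so the triple intersection must be non-empty, and in the pure XOR configuration it is empty; hence that exact model cannot satisfy all the hypotheses. But this does not rescue the lemma. Perturb the XOR example: on the eight atoms of $\{0,1\}^3$ (coordinate $i$ recording membership in $E_i$) put mass $t$ on $111$, mass $\tfrac14 - t$ on each of $110$, $101$, $011$ and on $000$, and mass $t$ on each of $100$, $010$, $001$, where $0<t<\tfrac18$. Then $\mu(E_i)=\tfrac12$ for each $i$, the events are pairwise independent, the triple intersection has positive mass so $w_0$ can be placed inside it, yet $\mu(E_1\cup E_2\cup E_3)=\tfrac34+t<\tfrac78 = 1-r^3$. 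This is realizable as a Fitting model (take $R_V$ the identity, $\alpha$ an atomic proposition true exactly on the union, $\alpha\in\mathcal{E}^V_w(s_i)$ exactly for $w\in E_i$, and $\I_{m,k}=\emptyset$ so the model conditions are vacuous), giving a genuine countermodel to the lemma as literally stated. So your conclusion stands: both your proof and the paper's are valid only under mutual independence, and hypothesis~(2) should be read, or restated, that way.
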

\begin{proof}
Whenever $s_i\justifies_V \alpha$ is true at a world $w$,  $\alpha$ is true at $w$ by soundness of axiom $\mathsf{(jt)}$. 
Hence,  by monotonicity of $\mu$ we find

\begin{equation*}
\mu([\alpha]) \geq \mu\left( \bigcup^{n}_{i=1}[s_i\justifies_V \alpha] \right)
=  1 - \mu \left( \bigcap^{n}_{i=1} [s_i\justifies_V \alpha]^C\right)
\overset{indep.}{\geq} 1 - \prod^{n}_{i = 1}r
= 1 - r^{n}
\end{equation*}
\end{proof}

An interactive proof protocol for a language $\mathcal{L}$ has the zero-knowledge property if,  from a successful execution, the verifier only learns that $x$ belongs to $\mathcal{L}$ but nothing else. 
Formally,  a protocol is perfectly zero-knowledge if there exists a probabilistic Turing machine $T$ that generates proof transcripts\footnote{In the setting of interactive Turing machines, a proof transcript is everything that $V$ sees on the public tapes during the protocol.} that are indistinguishable from original ones. If the verifier can obtain additional information with negligible probability, then the protocol is said to be statistically zero-knowledge. 

However, we cannot directly implement this definition because it would require to model the Turing machine $T$ as an agent and we would need to reason about something like indistinguishable terms. 
Simplified, a protocol is zero-knowledge if the verifier cannot compute the prover's secret. In our setting the prover's secret is represented by the term $t$. Hence, $f^n_t\justifies_V t\justifies_P\alpha$ means that the prover's secret has been revealed to the verifier. In fact,  $f^n_t\justifies_V t\justifies_P\alpha$ being unlikely is a direct consequence of the protocol being statistically zero-knowledge because the probability of the verifier knowing the prover's secret is bound by its ability to distinguish between proof transcripts. This gives rise to the following definition of zero-knowledge in $\IPJ$.
\begin{definition}[Evidentially zero-knowledge]
A protocol is \emph{evidentially zero-knowledge} if for all inputs $x$ belonging to~$\mathcal{L}$, the probability of the verifier knowing the prover's evidence for $x $ belonging to $ \mathcal{L}$ is negligible.
\end{definition}
To address evidentially zero-knowledge protocols, we add the following two axioms to $\IPJ$:
\begin{enumerate}
 \item $t\justifies_P \alpha \rightarrow \mathcal{P}_{\leq \frac{1}{n^k}}(f^n_t\justifies_V t\justifies_P\alpha)$ if $n>m$ and $\alpha \in \textsf{I}_{m,k}$;

\item $t\justifies_P \alpha \rightarrow \mathcal{P}_{\approx 0}(f^\omega_t\justifies_V t\justifies_P A) $ if  $\alpha \in \textsf{I}$.
\end{enumerate}
Models for $\IPJ$ are adjusted by requiring the condition:
\begin{equation*}
M \models t\justifies_P \alpha \rightarrow \mathcal{P}_{\leq \frac{1}{n^k}}(f^n_t\justifies_V t\justifies_P\alpha) \text{ if } n>m \text{ and }\alpha \in \textsf{I}_{m,k}.
\end{equation*}
It is easy to show that this extension is sound with respect to its models. The proof of soundness for the second axiom is similar to the soundness proof of~$\mathsf{(c\omega)}$.

\section{Conclusion}

We presented the probabilistic two-agent justification logic $\IPJ$,  in which we can reason about agents that perform interactive proofs.  The foundation of this work is based on probabilistic justification logic combined with interacting evidence systems. 
We further proposed a new technique that asserts a countable axiomatization and makes it possible to reason about the growth rate of a probability measure. 
Intuitively, the set $\I= \bigcap_k \bigcup_m \textsf{I}_{m,k}$ can be thought of as the set of all formulas that are known to be interactively provable. For a formula $\alpha \in \I_{m,k}$ and a term $t$ with $t\justifies_P\alpha$,  
\begin{equation*}
\mathcal{P}_{\geq 1 - \frac{1}{n^{k}}}(f^n_t\justifies_V\Box_P\alpha) 
\end{equation*}
holds for all $n>m$.
Hence, if $\alpha \in \I$,  then the following first order sentence is true
\begin{equation*}
\forall k \exists m \forall (n>m) \mu([f^n_t\justifies_V\Box_P\alpha])\geq 1 - \frac{1}{n^{k}},
\end{equation*}
which is the definition of an overwhelming function. 

Our approach of modelling limits with the help of specification sets is quite versatile as the following example shows.

\begin{example}
Consider a sequence of the form:
\[
 \mathcal{P}_{= L+0.5}(f^1_t \justifies_V \alpha)\quad
 \mathcal{P}_{=L+0.25}(f^2_t\justifies_V \alpha)\quad
 \mathcal{P}_{=L+0.125}(f^3_t\justifies_V \alpha)\quad
\cdots
\]
The sentence  we want to model is:
\begin{equation*}
(\forall \epsilon > 0)(\exists m  \geq 0)(\forall n>m) (\mathcal{P}_{\leq L+\epsilon}(f^n_t\justifies_V \alpha) \land \mathcal{P}_{\geq L-\epsilon}(f^n_t\justifies_V \alpha))
\end{equation*}
Again, for $\epsilon,L \in \mathbb{Q}$ and $m \in \mathbb{N}$, we define sets $\textsf{Conv}^L_{\epsilon,m}$ and let
\[
\textsf{Conv}^L := \bigcap_{\epsilon \in \mathbb{Q}}\bigcup_{m \in \mathbb{N}}\textsf{Conv}^L_{\epsilon,m}.
\]
With the following formulas, we can express that a sequence of probabilities converges:
\begin{enumerate}
\item $\mathcal{P}_{\leq L+\epsilon}(f^n_t\justifies_V \alpha) \land \mathcal{P}_{\geq L-\epsilon}(f^n_t\justifies_V \alpha)$ if $n>m$ and $\alpha \in \textsf{Conv}^L_{\epsilon,m}$;
\item $\mathcal{P}_{\approx  L}(f^\omega_t\justifies_V \alpha)$ if $\alpha \in \textsf{Conv}^L$.
\end{enumerate}
\end{example}
Additionally, we showed that our model can address a round-based definition of interactive proofs, however only for finitely many rounds since our measure is not $\sigma$-additive. 
Further, we also investigated zero-knowledge proofs. As it turns out,  $\IPJ$ cannot model the original definition because we cannot compare justification terms in $\IPJ$. However, we introduced the notion of evidentially zero knowledge,  which fits nicely in our framework.

Moreover, we established soundness of $\IPJ$.  
Our axiomatization is a combination of  systems that are known to be complete and we  conjecture that $\IPJ$ is complete, too.

From a more general perspective, this paper complements the list of motivations for justification logic.
There are the "classical" applications of justification logic in epistemology and proof theory~\cite{Art08RSL,artemovFittingBook,jlbook}.
Recently, justification logic also turned out to be useful to analyze certain deontic situations~\cite{ConflictingObligations} as well as a paradox in quantum physics~\cite{StuderQuantum},  both having to do with certain forms of consistency requirements.
The presented logical analysis of zero knowledge proofs is a novel example that shows the importance of the distinction between explicit (where the justification is shown) and implicit (where the justification is hidden) knowledge. The essence of a zero knowledge proof of a proposition $\alpha$ is that the verifier knows that the prover knows $\alpha$, but the verifier does not know the prover's justification for $\alpha$. 
Thus the verifier does not know why the prover knows $\alpha$ (this hints at  possible connections with the logic of knowing why~\cite{KnowingWhy}). 
That is, the verifier has explicit knowledge of the implicit knowledge of the prover.

\end{document}